\title{An Algorithm for Illuminating $n$ Nonoverlapping Circular Discs' Boundaries on the Plane with Application to Tree Stem Illumination Problem}
\author[a]{Phapaengmuang Sukkasem} 
\author[b,*]{Supanut Chaidee} 
\author[c,d]{Watit Khokthong}                                                           %
\address[a]{Program in Applied Mathematics, Department of Mathematics, Faculty of Science, Chiang Mai University, Chiang Mai 50200, Thailand}
\address[b]{Advanced Research Center for Computational Simulation (ARCCoS), Department of Mathematics, Faculty of Science, Chiang Mai University, Thailand}
\address[c]{Forest Restoration Research Unit, Department of Biology, Faculty of Science, Chiang Mai University, Chiang Mai 50200, Thailand}
\address[d]{Environmental Science Research Centre, Faculty of Science, Chiang Mai University, Chiang Mai 50200, Thailand}
\ead{supanut.c@cmu.ac.th} 
\abstract{Given a set of $n$ nonoverlapping circular discs on a plane, we aim to determine possible positions of points (referred to as cameras) that could fully illuminate all the circular discs' boundaries. This work presents a geometric approach for determining feasible camera positions that would provide total illumination of all circular discs. The Laguerre Delaunay triangulation, coupled with the intersection of slabs formed by the boundaries of circular discs, is employed to form the region that satisfies the given conditions. The experiment is conducted using a set of randomly positioned circular discs on a plane. This study has the potential to address the issue of illumination in forests by utilizing a LiDAR camera to determine the possible number and placement of cameras that can effectively illuminate trees within a forest.}
\keywords{Terrestrial Laser Scanner, Illumination Problem, Slabs}
\begin{document}
\frontmatter

\section{Introduction}

In our everyday observation, as well as in mathematical views as rooms or open areas with obstacles, we have the challenge of identifying the regions to circumvent obstructions in the visual field. From a mathematical perspective, the illumination problem is well established in discrete and computational geometry \cite{1startgal}. The main issue of illumination problems \cite{illuprob} is to find the least number of lighting points to obtain the highest illumination efficiency \cite{artgal92} and to place the lighting source so that it can illuminate the area or object \cite{artgal92}. There are widely known problems in the illumination problem, such as the Art Gallery problem \cite{artgalprob} \cite{artgal92} \cite{artgalalg}, the Guarding problem \cite{terrainguard} \cite{locateguard}, and the floodlight illumination problem \cite{illurectri} \cite{floodlightref1}. In various conditions for illumination, the lighting source that is positioned can be projected to test the efficiency. 

A practical example of the challenge of illumination is the use of LiDAR (Light Detection and Ranging) sensors to illuminate forest structures for environmental research \cite{forestcanoillu}. Forest practitioners have utilized the LiDAR \cite{lidarorigin} and camera sensors by the structure from motion to illuminate the forests. Focusing on the LiDAR, regarded as a light source capable of emitting illumination in all directions, the challenge lies in identifying the optimal position for the LiDAR laser scanner to maximize the capture of forest structural information. Recent studies focused on the suggestion of the use of terrestrial laser scanner (TLS) with forest. A study in Abegg et al. \cite{tls}, investigated the impact of different patterns of placement of scanners on visibility within forest stands when using terrestrial laser scanning. By simulating various scanning patterns and positions, the study demonstrated that the arrangement and number of scanning locations significantly influence mean visibility.  Another study by Wilkes et al. \cite{tls2} discussed the use of TLS to capture detailed forest structure data on extensive plots. The study recommended a systematic grid pattern for the scan position locations, highlighting the effectiveness of a 10m×10m grid in achieving uniform point distribution. 

The challenge of non-detection in single scan TLS of forests is addressed in a study from 2019 \cite{tls3}, which can miss some tree stems or other vegetative parts that are obscured by others. It introduced a new estimator in terms of marked point pattern of trees that compensates for these non-detection results, by considering different detection rules, which model partial or nearly full or visibility of a standing tree. The study in 2020 \cite{tls4} proposed an Iterative-Mode Scan Design for TLS in forests to minimize occlusion effects and improve accurate tree attribute estimation. The method evaluates the potential location for the scan using a PoTo index and the cumulative degree of ring closure (CDRC) to enhance the completeness of the scan, particularly in dense stands. 

Although the previously mentioned literature has suggested basic strategies for gathering tree-root information from TLS LiDAR, the specific procedure for determining the optimal placement of the active sensor remains improperly explored. To model this problem mathematically, we can represent the cross-section of the trees as a set of circles, with the TLS LiDAR serving as the emission light source. We aim to identify the optimal positions for TLS LiDAR to effectively illuminate a specified set of circles that can represent tree stems and to minimize the number of sources while maximizing the illuminated area. 

The study of Tóth \cite{illuconvexdisc} explores the illumination of convex discs and explains how a set of $n$ circular discs can be illuminated by $2n$ points. Then there is more study of scenarios where circles overlap or are centered on a line. The concept of a generalized Voronoi diagram is used to describe how vertices can illuminate the corresponding arcs of a circle. Theorems are presented to establish upper bounds for the required number of lighting points, affirming the sufficiency of $2n$ points, and suggesting $\max(2n,4n-7)$ points as necessary for nonoverlapping open convex discs. Those propose more results and combinatorial aspects in the art gallery problem in point guard and edge guard \cite{artgal92} \cite{illuconvex}.

In this paper, we present a geometrical approach to ascertain the locations of light sources required for the total illumination of a collection of circles, guaranteeing that each circle is entirely illuminated by the sources. A slab characterized by a region bounded by parallel lines formed by two adjacent circles will be investigated to establish the feasible region for the placement of light sources. We prove the geometrical properties of intersecting slabs, then utilize these areas as potential locations for source placement, and propose an algorithm that addresses the problem. Subsequently, we conduct experiments to determine the optimal scenario using ideally generated data comprising both identical and varied circles. We also use artificial trees and utilize LiDAR to illuminate these trees, so obtaining the illuminated region serves as validation from the computations to quantify the accuracy from the ideally generated data. 

This paper is organized as follows. The introduction includes a related review on the research topic. In the second section, we provide the basic background in computational geometry, which we have used in this study. In addition, we formulate the problem to tackle in this part. The geometrical properties of the illuminating region will be investigated in Section 3, including an algorithm to locate the illuminating location. In Section 4, we perform experiments to verify the proposed algorithm. We finally conclude our study in the last section of this paper.

\section{Preliminaries}

\subsection{Background in Computational Geometry}

The \textbf{convex hull} \cite{compgem} of a set $S$, denoted $CH(S)$, is the smallest convex set that contains $S$. It is the intersection of all convex sets that contain $S$, i.e., for any $\alpha \in \Lambda$ and $C_\alpha$ a convex set covering $S$, $CH(S)= \bigcap\limits_{\alpha \in \Lambda} C_\alpha$. There are several algorithms to compute the convex hull from a set of finite points on the plane, such as divide and conquer. The time complexity of the convex hull algorithm is known to be $O(n\log n)$.

A set of points $p_i \in P$ where $i=1,2,\dots,n$ on a convex hull segment of a convex set $S$ is \textbf{visible} \cite{vis} to a light point $q$ if and only if there exists a non-crossing line segment $\overrightarrow{vp_i}$ between all points in $P$ to $q$. The \textbf{illumination problem} is a classic problem to find the location of the light source $s$ that an object $O$ is visible to $s$. We call $s$ an illuminating point and $O$ is illuminated by $s$.  It is remarked that the boundary of a convex set in the plane can always be illuminated by using three light points \cite{illuprob}. 

One of the most used diagrams in computational geometry is the Voronoi diagram, as compiled in the book \cite{Okabe}. Let $P=\left\{ p_i,\dots,p_n\right\}$ be a set of points on a plane called a set of generator points, and let $d_E(p,p_i)$ denote the (Euclidean) distance between the points $p$ and $p_i$ on the plane. The Voronoi region $V(p)$ of a site $p$ in $S$ is $$V(p) =\left\{p \in \mathbb{R}^2 | \norm{p-p_i} \leq \norm{p-p_j} \text{ where } i,j=1,\dots,n \text{ and } i\neq j\right\}$$ such that the Voronoi diagram \cite{compgem} of a generator set $P$ is the Voronoi region written as $Vor(P)=\left\{V(p_i),\dots,V(p_n)\right\}$. The \textbf{Voronoi diagram} corresponds to the Voronoi edges and Voronoi vertices that divide the plane into regions. The cells of the partitioning of a plane are called Voronoi polygons. The \textbf{Delaunay triangulation} \cite{compgem} is generated by the line segment between $p_i$ and $p_j$ if and only if the boundaries of $V(p_i)$ and $V(p_j)$  share a common edge.

Based on Tóth \cite{illuconvexdisc}, which discusses the use of the vertex of the Laguerre Voronoi diagram as the placement of light points, the definition of the Laguerre Voronoi diagram and the Laguerre Delaunay triangulation are necessary to declare, which is the duality of each other.

Given $G=\left\{C_i,\dots,C_n\right\}$ be a set of $n$ circular discs in the plane, let $r_i$ and $p_i=(x_i,y_i)$ be the radius and the center of the circular disc $C_i$. For a disc $C_i$ and point $p=(x,y)$ in the plane, let us define $d_L(p,C_i)$ by $$d_L(p,C_i)=d(p,p_i)^2-r_i^2.$$ Let us define $$V_L(G,C_i)=\left\{p\in \mathbb{R}^2|d_L(p,C_i)<d_L(p,C_j),i\neq j\right\}.$$ The plane is partitioned into $V_L(G,C_i),\dots,V_L(G,C_n)$ and their boundaries. The partition is called the \textbf{Laguerre Voronoi diagram} \cite{Aurenhammer, Imai} or the power diagram.

For a given set $G$, the \textbf{Laguerre Delaunay diagram }\cite{laguerre} is generated by the following procedure. We draw a line segment between $p_i$ and $p_j$ if and only if the boundaries of $V_L(G,C_i)$ and $V_L(G,C_j)$  share a common edge. For circular disc $C_i$, we define $C_i^*$ as the 3D point mapped by $(x_i,y_i)\mapsto x_i^2+y_i^2-r_i^2$; i.e., defined by $(x_i, y_i, x_i^2+y_i^2-r_i^2)$ for $i=1,\dots,n$, and define $G^*=\left\{C_i^*,\dots,C_n^*\right\}$. The projection of the lower 3D convex hull of $CH(G^*)$ onto the $xy$ plane is the Laguerre Delaunay diagram for $G^*$.

In this study, we focus on considering regions induced by tangent lines. In general, a \textbf{slab} is the region bounded by two parallel lines $l_i$ and $l_j$.

\subsection{Problem formulation}

Given the forest with $n$ trees in three-dimensional space, we are going to project the cross section of the forest into a set of $n$ circular discs in two-dimensional space. The problem of three-dimensional forest scanning can be addressed with the illumination problem of positioning the illuminating point that covers most of the boundaries of the $n$ circular discs, given the LiDAR and the camera sensor as the illuminating point to illuminate the boundaries of the forest as a set of $n$ circular discs. 
In this study, our objective is to construct a deterministic geometrical algorithm to find a set of appropriate regions to position the illuminating point.

\section{Geometrical Properties of Illuminating Region}
\subsection{Properties of slabs}

From the previous section, we are going to represent the forest as a set of circular discs and the LiDAR and camera sensors as illuminating points. Therefore, we analyze the properties of the halfplane that is constructed by a tangent line corresponding to a point on the circumference of the circular disc to discover the properties of an illuminating point.

\begin{lemma}
	Let $C$ be a circular disc, $p$ be a point on the circular disc $C$, and $l_p$ be the tangent line to $C$ at $p$. If the illuminating point $q$ is on the side of halfplane corresponding to $l_p$ which does not contain $C$, then $p$ is always illuminated by $q$. \Llabel{illuminate}
\end{lemma}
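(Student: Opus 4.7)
The plan is to exploit the fact that a tangent line to a circle at a point $p$ touches the disc only at $p$, with the entire disc lying in a single closed halfplane determined by that tangent. This means that showing illumination reduces to a convexity argument about the line segment from $q$ to $p$.

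First I would fix notation: let $H^-$ denote the closed halfplane determined by $l_p$ that contains $C$, and $H^+$ the opposite closed halfplane, so that $H^+\cap C=\{p\}$ (since $l_p$ is tangent to $C$ at $p$). By hypothesis $q\in H^+\setminus l_p$, and $p\in l_p\subset H^+$.

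Next I would observe that $H^+$ is a convex set, so the entire segment $\overline{qp}$ is contained in $H^+$. Because any intersection of $\overline{qp}$ with $C$ must lie in $H^+\cap C$, and we have already identified $H^+\cap C=\{p\}$, the segment $\overline{qp}$ meets $C$ only at its endpoint $p$. Hence no other point of $C$ blocks the line of sight from $q$ to $p$, which by the definition of visibility recalled in the preliminaries means that $p$ is illuminated by $q$.

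The only subtle point is justifying $H^+\cap C=\{p\}$, which follows immediately from the standard fact that a tangent line to a convex disc supports the disc, meeting it in exactly the single point of tangency; the rest is straightforward convexity. I do not anticipate any real obstacle, and the argument can be written compactly in a handful of lines, possibly accompanied by a figure showing the halfplane split by $l_p$ and the segment $\overline{qp}$ reaching the boundary point from the exterior side.
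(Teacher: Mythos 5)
Your argument is correct, and it is a genuinely different route from the one in the paper. You argue directly from the supporting-line property of the tangent: the closed halfplane $H^+$ on the non-$C$ side of $l_p$ meets the disc only at $p$, and since $H^+$ is convex and contains both $q$ and $p$, the segment $\overline{qp}$ lies in $H^+$ and therefore meets $C$ only at its endpoint, so nothing blocks the line of sight. The paper instead works from the viewpoint of $q$: it draws the two tangent rays from $q$ to $C$, uses the chord through the two points of tangency to split the circle into a visible arc $a_i$ facing $q$ and an invisible arc $a_j$, and then observes that the points of $a_i$ are exactly those whose tangent line leaves $q$ on the non-$C$ side. Strictly speaking, the paper's write-up establishes the implication ``$p$ on the visible arc $\Rightarrow$ $q$ is beyond $l_p$,'' which is the converse of the stated lemma and needs the (true but unstated) equivalence of the two conditions for a circle; your version proves the stated implication head-on with no case analysis, and generalizes verbatim to any convex obstacle. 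What the paper's approach buys in exchange is the explicit identification of the visible arc cut out by the tangent points, a picture it reuses informally when defining slabs and objective arcs later in Section 3.
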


\begin{proof}
	At the illuminating point $q$, we can construct two tangent rays to the circular disc $C$ without obstacle as shown in Figure \ref{lemanddef} (a). The line passing through two tangent points $q_i$ and $q_j$ then separates the circular discs into two arcs, $a_i$, which are aimed at the point $q$ and $a_j$ opposite to $a_i$. If point $p$ lies on $a_i$ then there exists a line segment $pq$ that all parts do not lie inside $C$. On the other hand, if the point $p$ is in $a_j$ then some part of $pq$ lies inside $C$. Thus, $q$ only illuminates $p$ that lies in the arc $a_i$. Therefore, we consider $p$ that lies in $a_i$ then construct the tangent line $l_p$ to $C$ in $p$. We have that $q$ is always on a side of halfplane of $l_p$that does not contain $C$ since tangent line of any end point of $a_i$ contain ray corresponding to $q$ and that end point.
\end{proof}

\begin{figure}[h!t]
	\centering
	\includegraphics[scale=0.3]{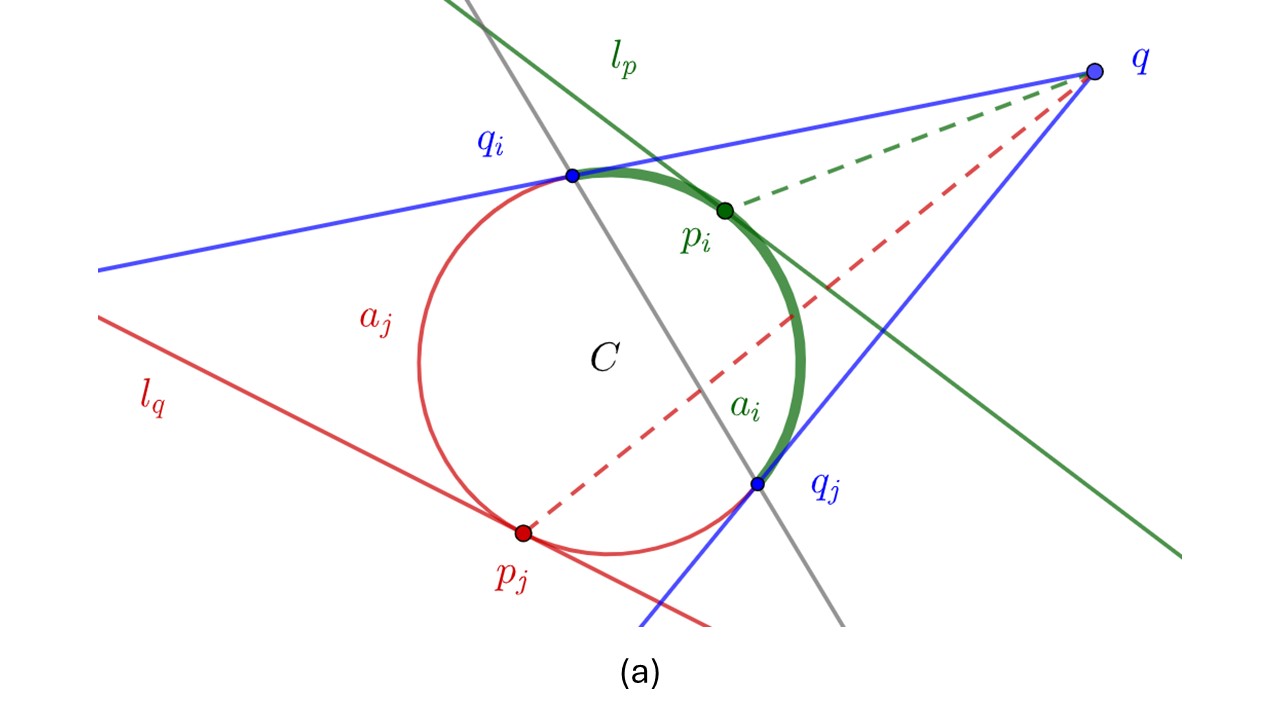}
	\includegraphics[scale=0.3]{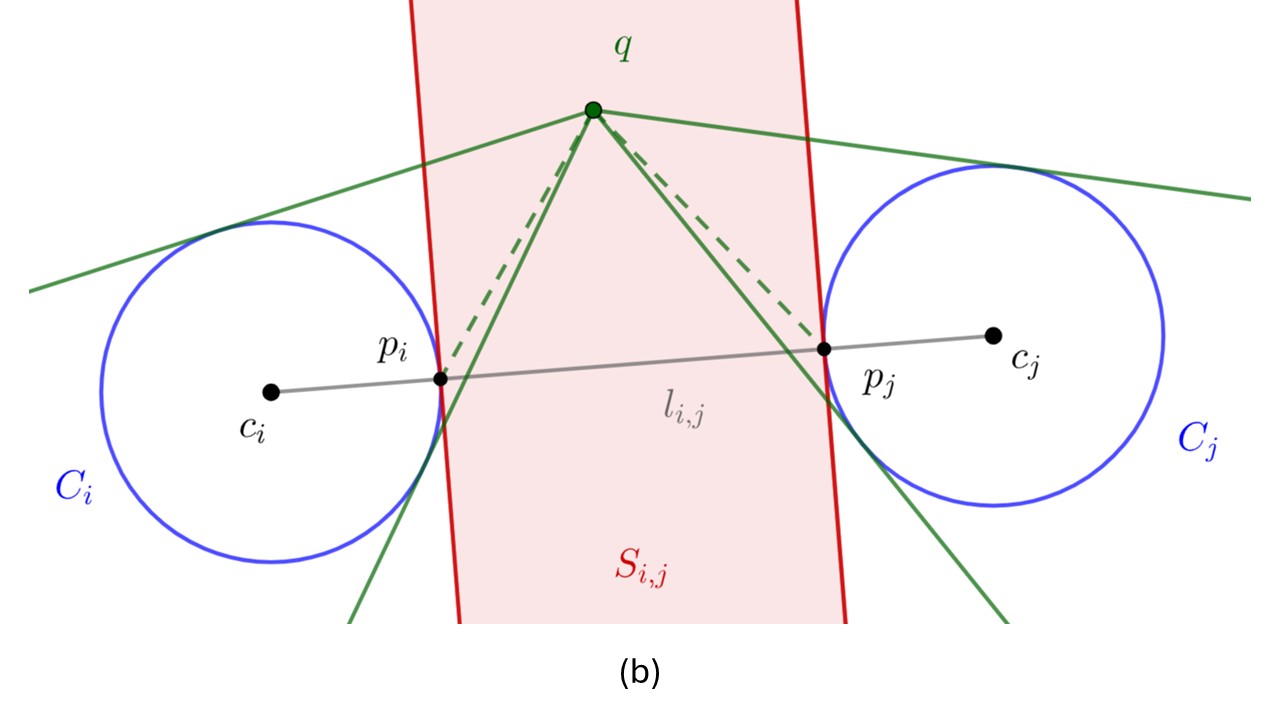}
	\caption{(a) a point $p_i$ on arc $a_i$ is visible to $q$ but a point $p_j$ on another arc is not visible and (b) the slab $S_{i,j}$ of circular discs $C_i$ and $C_j$ with point $p_i$ and $p_j$ visible to illuminating point $q$ on $S_{i,j}$}
	\label{lemanddef}
\end{figure}

\newpage

From \Lref{illuminate}, we extend the idea to two circular discs with an illuminating point. We construct two parallel tangent lines on each circular disc that correspond to the intersection point between their circumference and line segment from center of both circular discs. We define a slab as the region of intersection of two halfplanes that do not contain any circular discs from parallel tangent lines will be a region for illuminating point location, by the following definition.

\begin{definition} Let $C_i$ and $C_j$ be two nonoverlapping circular discs and $l_{i,j}$ be a segment joining the center $c_i$ and $c_j$ of these two circular discs. The region $S_{i,j}$ bounded by pair of parallel tangent lines $l_i$ and $l_j$, which are perpendicular to $l_{i,j}$ at the intersection points $p_i$ and $p_j$ between the segment and circular discs boundaries, of these two circular discs is called slab.
\end{definition}

Remark that the slab is unique up to the given size of circular discs and the distance between two circular discs. Therefore, we would verify the visible property of an illuminating point in a slab by the following lemma.

\begin{lemma}
	Any point on a slab always illuminates two intersection points on different two nonoverlapping circular discs which correspond to boundaries of the slab. \Llabel{slabillumination}
\end{lemma}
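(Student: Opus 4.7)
The plan is to reduce the lemma to two applications of \Lref{illuminate}, one for each of the tangent points $p_i$ and $p_j$, after verifying that the companion disc does not occlude the corresponding line of sight. The central geometric observation I will exploit is that the two bounding tangent lines of the slab place $C_i$ and $C_j$ in disjoint closed half-planes, with the slab itself sandwiched between them.

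To carry this out, fix an arbitrary $q \in S_{i,j}$ and focus first on $p_i$. Since $l_i$ is tangent to $C_i$ at $p_i$ and perpendicular to the center-to-center segment $l_{i,j}$, the whole disc $C_i$ lies in the closed half-plane determined by $l_i$ on the opposite side from the slab. Because $q$ is in the slab, $q$ lies on the opposite side of $l_i$ from $C_i$, so \Lref{illuminate} applies and guarantees that the segment from $q$ to $p_i$ does not meet $C_i$ except at $p_i$ itself. To finish establishing visibility of $p_i$, I still need to rule out obstruction by $C_j$: both $q$ and $p_i$ belong to the closed slab (for $p_i$ this is because $p_i \in l_i$), hence both lie in the closed half-plane bounded by $l_j$ on the slab side. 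By convexity of this half-plane, the entire segment $qp_i$ stays on that side, while $C_j$ lies strictly in the opposite half-plane, so the segment cannot cross $C_j$. Swapping the roles of $i$ and $j$ gives the same conclusion for $p_j$.

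The only real obstacle is careful bookkeeping of which side of each tangent line contains which disc, using the defining fact that $p_i$ is the intersection of $l_{i,j}$ with $\partial C_i$ closer to $c_j$, so that the tangent at $p_i$ separates $C_i$ from the slab (and symmetrically for $p_j$). Once this half-plane picture is pinned down, each sight line is automatically unobstructed because it stays on the correct side of both tangent lines, and \Lref{illuminate} closes the argument.
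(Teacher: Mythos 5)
Your proof is correct, and it rests on the same basic ingredient as the paper's — reducing each tangent point to \Lref{illuminate} via the half-plane structure of the slab — but it runs in the opposite logical direction and is more complete. The paper's proof argues necessity: it shows that a point $q$ lying on the disc-containing side of either bounding line fails to illuminate the corresponding tangent point, and concludes that $q$ "must be" in the slab; the sufficiency claim that the lemma actually asserts (every point of the slab illuminates both $p_i$ and $p_j$) is left implicit. You instead fix $q$ in the slab and verify visibility directly, and in doing so you supply a step the paper skips entirely: ruling out occlusion of $p_i$ by the \emph{other} disc $C_j$. Your observation that $q$ and $p_i$ both lie in the closed half-plane of $l_j$ on the slab side, while $C_j$ lies in the opposite closed half-plane touching $l_j$ only at $p_j$, so that the segment $qp_i$ cannot enter $C_j$, is exactly the separation argument needed to make the lemma airtight; the paper's version tacitly assumes no cross-occlusion. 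The only cosmetic caveat is the boundary case $q \in l_i$ or $q \in l_j$, where "opposite side" should be read as the closed half-plane not containing the interior of the disc, but this does not affect the conclusion.
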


\begin{proof}
	Given two nonoverlapping circular discs $C_i$ and $C_j$, there exists a slab constructed by intersection points $p_i$ and $p_j$ of the line segment $l_{i,j}$ between the center of two circular discs and their circumference. The half-planes corresponding to these two intersection points is the boundary of the slab. If the illuminating point $q$ is on the side of both halfplanes which contain $C_i$, by \Lref{illuminate} then $q$ cannot illuminate the point $p_i$. This also implies for $q$ on the side of the half-plane with $C_j$. Thus, the illuminating point $q$ must be in the region that does not include circular discs, which is the area of the slab, to illuminate $p_i$ and $p_j$ simultaneously.
\end{proof}

For a collection of $n$ circular discs, we consider neighboring circular discs using Laguerre Delaunay triangulation of those circular discs. Hence, for any non-degenerate triangulation, we construct a slab of each two circular discs from three circular discs corresponding to a triangle in the triangulation.

We apply the idea of intersection points on two circular discs into three circular discs that can construct circular arcs on each circular disc that have these intersection points as their end point. In fact, any convex body uses at least three illuminating points to illuminate all its boundaries \cite{illuprob}, and half of the convex body cannot be illuminated by one illuminating point. For any three nonoverlapping circular discs, circular arcs that have intersection points as end points and arc lengths that are less than half the boundary size are called \textbf{objective arcs}.

\subsection{Intersection of slabs}
In previous section, we see that the slab is a region between parallel lines to position the illuminating point of two circular discs. In this section, we are going to observe the intersection of slabs, which will be leads to a feasible region of illuminating point position, and explore the geometric shape and properties of feasible region constructed by the intersection of three slabs generated by three circular disc. 

Remark that two groups of lines that compose the slabs have different slopes. Then we observe that the region constructed by the intersection of two slabs is a parallelogram as shown in Figure \ref{parallelogram}

\begin{figure}[h!t]
	\centering
	\includegraphics[scale=0.56]{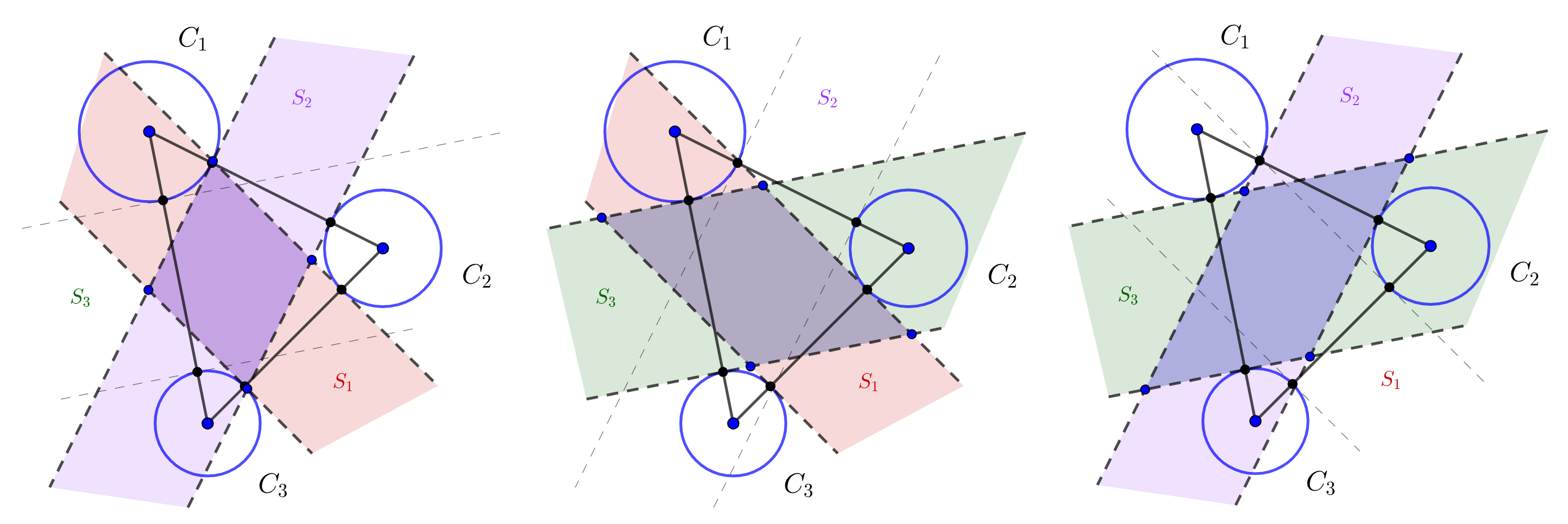}
	\caption{Every pair of two slabs from constructed by three nonoverlapping circular discs forms a parallelogram.}
	\label{parallelogram}
\end{figure}

\begin{definition} Let $C_i$, $C_j$ and $C_k$ be three nonoverlapping circular discs. The region constructed by intersection of three slabs construct by these three circulars is called \textbf{feasible region}.
\end{definition}

\begin{lemma}
	The feasible region of three nonoverlapping circular discs is a convex polygon. \Llabel{feasibleconvex}
\end{lemma}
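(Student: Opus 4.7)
The plan is to express the feasible region as a finite intersection of half-planes and to deduce both convexity and boundedness from that representation. As a first step, I would observe that each slab $S_{i,j}$, being bounded by two parallel tangent lines $l_i$ and $l_j$ with the two discs lying on opposite sides, is exactly the intersection of the two closed half-planes determined by $l_i$ and $l_j$ that exclude the corresponding discs. Since half-planes are convex, each of the three slabs $S_{i,j}$, $S_{j,k}$, $S_{i,k}$ is itself convex.

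Next, by definition the feasible region is the intersection of these three slabs, hence it is the intersection of six closed half-planes. Using the standard fact that the intersection of any family of convex sets is convex, the feasible region is convex. Moreover, its boundary lies in the union of the six bounding tangent lines, which already guarantees that the region is piecewise-linear and has at most six edges.

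To upgrade \emph{polyhedral convex region} to \emph{convex polygon}, I would establish boundedness. The three discs correspond to a non-degenerate triangle of the Laguerre Delaunay triangulation, so the centers $c_i$, $c_j$, $c_k$ are not collinear. The boundary lines of slab $S_{\alpha,\beta}$ are perpendicular to the segment joining $c_\alpha$ and $c_\beta$, so the three pairs of slab-boundary lines have three pairwise-distinct orientations. Choosing coordinates aligned with one such direction, one sees that a transversal slab forces a finite extent along the perpendicular axis, so the intersection of any two of the three slabs is already contained in a bounded parallelogram (as illustrated in Figure \ref{parallelogram}), and the third slab only cuts this parallelogram further.

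The main obstacle I expect is this boundedness step, which hinges on verifying that the three slab orientations are genuinely distinct. This is where the non-collinearity of the centers, inherited from the non-degeneracy of the Laguerre Delaunay triangle, is essential; for three arbitrary discs with collinear centers, two of the slab orientations would coincide and the intersection could be an unbounded strip rather than a polygon. With collinearity ruled out, combining convexity, piecewise-linearity, and boundedness yields the claim.
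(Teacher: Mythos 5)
Your proof is correct and follows essentially the same route as the paper: both reduce the feasible region to an intersection of convex sets (the paper via the three pairwise parallelograms, you via six closed half-planes) and invoke the fact that an intersection of convex sets is convex. The one place you go beyond the paper is the boundedness step --- the paper simply cites its earlier observation that two slabs intersect in a parallelogram, whereas you justify it by noting that the slab boundaries are perpendicular to the three center-joining segments and hence have pairwise distinct orientations when the centers are non-collinear; this detail is worth having, since without it the conclusion would only be ``convex polyhedral region,'' not ``convex polygon.''
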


\begin{proof}
	Given three nonoverlapping circular discs $C_i$, $C_j$, and $C_k$, we can construct three slabs $S_{i,k}, S_{j,k}$ and $S_{i,k}$, which correspond to each pair of circular discs. Since each pair of slabs generates a parallelogram, we have three parallelograms $P_{i,k}, P_{j,k}$ and $P_{i,k}$ from the intersection of every pair of three slabs. The parallelogram is a convex polygon that is the boundary of a convex set. In fact, since the intersection of convex sets is convex, the intersection of parallelograms $P_{i,k}, P_{j,k}$ and $P_{i,k}$ constructs a convex region. Thus, the feasible region is a convex polygon.
\end{proof}

\begin{theorem}
	Any point in a feasible region of three nonoverlapping circular discs always illuminates all objective arcs. \Tlabel{feasible}
\end{theorem}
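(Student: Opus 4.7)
The plan is to fix an arbitrary point $q$ in the feasible region and argue, separately for each of $C_i$, $C_j$, $C_k$, that $q$ illuminates the corresponding objective arc; the argument is completely symmetric in the three indices, so it suffices to focus on $C_i$ and the objective arc whose endpoints are $p_{i,j}$ and $p_{i,k}$.

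First I would note that $q$ lies in both slabs $S_{i,j}$ and $S_{i,k}$, so \Lref{slabillumination} applied to each pair immediately gives that $q$ illuminates the two endpoints $p_{i,j}$ and $p_{i,k}$; in particular $q$ lies strictly outside $C_i$, because each of these two slabs is constructed so as to exclude $C_i$. Since $q$ is exterior to $C_i$, the two tangent lines from $q$ to $C_i$ touch at a pair of points $q_1, q_2$ and cut $\partial C_i$ into two arcs, and -- as already observed inside the proof of \Lref{illuminate} -- the arc $a_q$ facing $q$ is exactly the set of points on $\partial C_i$ visible from $q$.

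The key quantitative input is that $|a_q|$ is strictly less than half the circumference of $C_i$: the central angle subtended by $a_q$ equals $2\arccos(r_i/|q-c_i|)<\pi$ for every $q$ outside $C_i$. Because $p_{i,j}$ and $p_{i,k}$ both lie in $a_q$, exactly one of the two arcs of $\partial C_i$ joining them is a subarc of $a_q$, and that subarc has length at most $|a_q|<\pi r_i$. By the very definition given just before the theorem, this short subarc \emph{is} the objective arc on $C_i$, so every one of its points is illuminated by $q$, and the same reasoning applied to $C_j$ and $C_k$ finishes the proof.

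The step I expect to be the main obstacle is this last identification: merely knowing that the two endpoints $p_{i,j}$ and $p_{i,k}$ are visible from $q$ does not by itself imply that the whole arc between them is visible, because a priori one of the two candidate arcs could be the wrong one. The strict inequality $|a_q|<\pi r_i$ is what rules out the ``long'' arc being a subarc of the visibility arc and thus forces the subarc lying in $a_q$ to coincide with the objective arc. In the degenerate configuration in which $p_{i,j}$ and $p_{i,k}$ happen to be diametrically opposite on $C_i$, neither of the two arcs is strictly shorter than half the circumference, so this boundary case has to be excluded by a nondegeneracy assumption on the triangle of centers.
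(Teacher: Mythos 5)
Your proof is correct and rests on the same two pillars as the paper's own argument: membership of $q$ in the slabs $S_{i,j}$ and $S_{i,k}$ together with \Lref{slabillumination} gives visibility of the two endpoints $p_{i,j}$ and $p_{i,k}$ on $C_i$, and one then concludes that the entire objective arc is visible. The difference lies in how that last step is handled. The paper simply asserts that a point of the feasible region ``illuminates all points between two end points of the objective arc,'' leaving implicit exactly the issue you flag: visibility of the two endpoints does not by itself determine which of the two arcs joining them is seen. Your quantitative observation --- that the visible arc $a_q$ subtends a central angle $2\arccos\left(r_i/|q-c_i|\right)<\pi$ for any $q$ exterior to $C_i$, so the subarc of $a_q$ joining the two endpoints is necessarily shorter than half the circumference and therefore coincides with the objective arc as the paper defines it --- supplies precisely the justification the paper omits, and is a genuine gain in rigor rather than a different route. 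Your closing remark about the degenerate case (collinear centers, hence diametrically opposite endpoints, hence no arc strictly shorter than half the boundary) is also consistent with the paper, which explicitly restricts attention to non-degenerate triangulations.
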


\begin{proof}
	Given three nonoverlapping circular discs, for each triangle in the Laguerre-Delaunay triangulation, there are 6 intersection points between triangle edges and circular disc boundaries, which are two on each circular disc. Any two points on each circular disc form two circular arcs, which are objective arc and non-objective arc. From \Lref{slabillumination}, the illuminating point can illuminate two intersection points on different circular discs at the same time when it lies on the slab. Since the feasible region is the intersection of three slabs, any point on the feasible region always illuminates all points between two end points of the objective arc on each circular disc. Thus, the point on feasible region always illuminates all three objective arcs. 
\end{proof}

From \Tref{feasible}, we consider the objective arcs to be the region which is generated to be illuminated. Since the illuminating point is in the feasible regions, we ensure that the objective arc is the least arc of circular disc boundary illuminated by illuminating point.

\subsection{Shapes of feasible region}
We consider the possible shape of a feasible region with respect to the set of circular discs by the following theorem.

\begin{theorem}
	The possible shape of the feasible region ranges from 4-gon to 6-gon
\end{theorem}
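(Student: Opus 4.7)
The plan is to combine a counting upper bound with a structural argument for the lower bound, then exhibit explicit configurations realizing each value in $\{4,5,6\}$.

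For the upper bound, observe that the feasible region $F = S_{i,j} \cap S_{i,k} \cap S_{j,k}$ is the intersection of three slabs, each bounded by two parallel tangent lines, so $F$ is the intersection of six half-planes. A convex polygon cut out by six half-planes has at most six edges, each lying on one of the bounding lines; this yields the upper bound of $6$ immediately.

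For the lower bound, the key observation is that the two bounding lines of each slab are parallel, so the edges of $F$ partition into at most three parallel classes (one per slab). The three slab directions are perpendicular to the three pairwise nonparallel sides of the triangle $c_i c_j c_k$, so any nonempty intersection of the three slabs is bounded. A single slab alone is an unbounded strip, so at least two slabs must contribute edges; if exactly two do, then each must supply both of its parallel bounding lines (otherwise $F$ would remain unbounded in the third direction), giving a parallelogram with four edges and a redundant third slab. This establishes the lower bound of $4$.

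To show that $5$-gons and $6$-gons also arise, I would exhibit explicit families of configurations. Three congruent circles at the vertices of an equilateral triangle produce a regular hexagon by symmetry (all six bounding lines are active); widening one slab until both of its bounding lines become redundant collapses $F$ to the parallelogram from the other two slabs; an intermediate asymmetric perturbation leaves exactly one of the six lines inactive, yielding a pentagon. The main delicate step is ruling out the triangular $3$-edge case, in which each slab would contribute exactly one of its two bounding lines while the other three are redundant. I would handle this by arguing that, because each slab's two bounding lines are anchored at specific tangent points of two different circles, the three ``unused'' bounding lines cannot simultaneously be redundant with respect to any candidate triangle, so the $3$-gon case does not occur.
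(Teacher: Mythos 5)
Your upper bound (six half-planes, hence at most six edges) and your observation that the edges fall into three parallel classes, one per slab, are both correct and in fact cleaner than the paper's treatment; your symmetric hexagon example and the degeneration to a parallelogram also match in spirit what the paper does for the cases $6$ and $4$. However, the theorem's entire difficulty lives in the step you defer: ruling out the triangle. For three \emph{arbitrary} slabs the intersection certainly can be a triangle --- take three very wide strips whose ``inner'' bounding lines form a triangle while the three ``outer'' lines lie far away and are redundant --- so no amount of counting or parallel-class bookkeeping can exclude the $3$-gon. The exclusion must use the specific anchoring of the slabs: the two bounding lines of $S_{i,j}$ are the tangent lines to $C_i$ and $C_j$ at the points where the segment $c_ic_j$ meets their boundaries, so $S_{i,j}$ occupies exactly the gap between the two discs, each disc is tangent to precisely one bounding line of each slab it participates in, and the discs are mutually nonoverlapping. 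Your sentence ``the three unused bounding lines cannot simultaneously be redundant with respect to any candidate triangle'' is a restatement of the claim to be proved, not an argument for it; you give no mechanism by which the tangency constraints force a contradiction. This is precisely where the paper spends essentially all of its effort (an argument about a disc touching two of the active lines at equal distance from a vertex of the putative triangle, plus a separate sub-case where the three active lines are concurrent and a common tangent line forces the third disc onto the wrong side), so the proposal as written has a genuine gap at the theorem's central point.

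A secondary, smaller issue: the pentagon case is asserted by ``an intermediate asymmetric perturbation leaves exactly one of the six lines inactive'' without exhibiting a configuration or giving a continuity argument that some line must become inactive one at a time; if you intend the existence of all three shapes to be part of the statement, this needs to be carried out (the paper is likewise thin here, but your write-up should not inherit that thinness).
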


\begin{proof}
	We first prove that the feasible region is not a triangle. Assume that $T$ is the feasible region in triangular shape with corresponding triangle $T_{\triangle}$ in the Laguerre Delaunay triangulation, i.e. it corresponds to the circles $C_i, C_j$ and $C_k$, then we consider the intersection points between the boundaries of $C_i, C_j, C_k$ and $T_{\triangle}$ that generate slabs $S_{i,j}, S_{i,k}$ and $S_{j,k}$ of each pair of circles. Then there exists a circle that touches two slabs, so that the vertex of intersection of two slabs is a vertex of the feasible region. Without loss of generality, assume that $C_i$ and that each circle of $C_j$ and $C_k$ has at least one intersection, which is on the same triangle edge joining to $C_i$. Note that the intersection point would be on the same line passing the feasible regions vertices. If $C_i$ is constructed from a feasible region in triangular shape, then $C_i$ would touch the line of two slabs $l$ and $m$ with the same distance to the vertex $v$ of the feasible region generated by these two lines called $d$. Therefore, we construct another line $k$ to $C_i$ which is perpendicular to the lines $l$ and $l^\prime$ of the slab. Then $k$ intersects $l$ and $l^\prime$ at points $p$ and $p^\prime$. $p$ is on the same line as $l$, but $q$ is not on the same line of any line that makes up the feasible region. This gives us a contradiction, as illustrated in Figure \ref{shape} (a).

    For special case where the line of slab concurrently is a triangle, if $C_i$ is constructed as the same as previously mentioned by a line $l$ and $m$, then there is a line $l, m$ and $n$ as the line of slabs, which intersects each other. $C_j$ is constructed by line $l^\prime$ and $n^\prime$. Construct a tangent line $L$ of $C_i$ and $C_j$ in such a way that $C_i$ and $C_j$ are on the same side and $L$ crosses the feasible region, then $C_k$ would be on the opposite side of $C_i$ and $C_j$ but the constructed line force $C_k$ to be on the same side of $C_i$ and $C_j$.

    Secondly, it is possible to be a parallelogram (4-gon) in the case that the first two slabs intersect to be a parallelogram, and the parallel lines of the third slab concurrent the intersected points, or cover the existing parallelogram.

    We can finally ensure that the maximum vertices number of feasible regions is 6, if the region of the slab covers any two non-adjacent vertices of two slab intersections, which is a parallelogram, then the slab will intersect all four edges of a parallelogram given four maximum intersection point. These four intersection points and two nonadjacent vertices of a parallelogram are vertices of the feasible region, which is 6 vertices.

    We mention that we do not focus on the non-degeneracy case of triangulation, since the case gives the result of the slab that lies inside another slab. 
\end{proof}

\begin{figure}[h!t]
	\centering
	\includegraphics[scale=0.45]{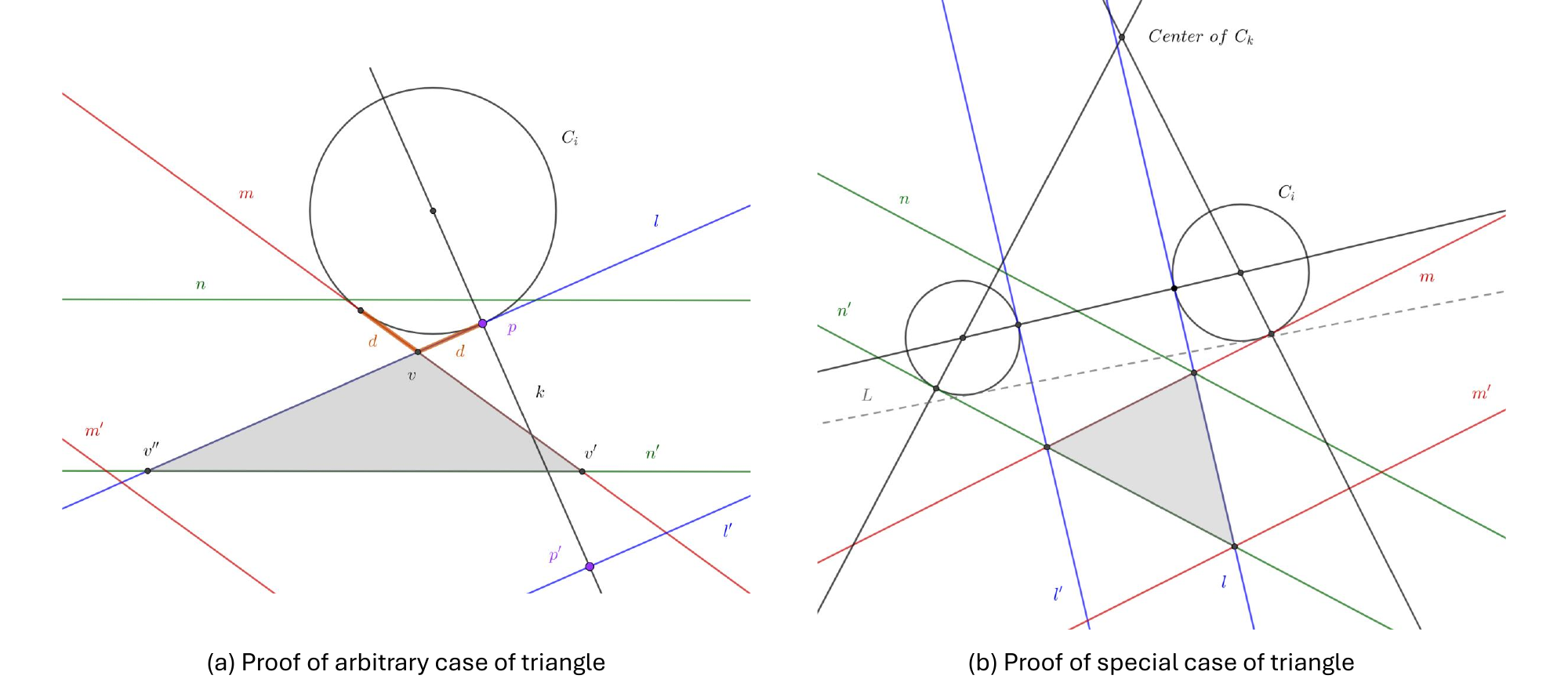}
	\caption{(a) proof of arbitrary case for the feasible region in triangular shape and (b) proof of special case that line of slab concurrent to be triangle for the feasible region in triangular shape.}
	\label{shape}
\end{figure}

\newpage

From our observation, if we consider a Voronoi diagram of a set of nonoverlapping circulars of the same radius, then the edges of the Voronoi diagram are the middle lines of slabs. Also, the vertices of Voronoi diagram are always in a feasible region.

\subsection{Algorithm to illuminate circular discs}

In construction of feasible regions on a set of circular discs, it is possible for feasible region overlapping. Since the feasible region is always a convex polygon, the intersection of feasible region still be a convex polygon. Therefore, we come up with some corollary that supports our algorithm in a further section of the study. 

\begin{corollary}
	Any point on the intersection of overlapping feasible regions illuminates all objective arcs of overlapping feasible regions.
\end{corollary}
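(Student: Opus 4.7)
The plan is to deduce this corollary as a direct consequence of \Tref{feasible}, treating it essentially as a pointwise application combined with a union argument across all overlapping feasible regions. First I would set up the notation: let $F_1, F_2, \dots, F_m$ be the feasible regions that overlap, each $F_\ell$ being the intersection of three slabs associated with a triangle $T_\ell$ in the Laguerre Delaunay triangulation, and let $A_\ell$ denote the collection of three objective arcs tied to $F_\ell$. Let $q$ be any point in $\bigcap_{\ell=1}^{m} F_\ell$ (nonempty by the overlap hypothesis).

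Next I would apply \Tref{feasible} separately to each index $\ell$. Since $q \in F_\ell$ and $F_\ell$ is itself a feasible region of three nonoverlapping circular discs, \Tref{feasible} guarantees that $q$ illuminates every objective arc in $A_\ell$. Because this holds for every $\ell = 1, \dots, m$, the point $q$ simultaneously illuminates every arc in $\bigcup_{\ell=1}^{m} A_\ell$, which is precisely the collection of all objective arcs of the overlapping feasible regions.

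There is really no substantive obstacle here; the corollary is a packaging of \Tref{feasible} over several triangles. The only subtlety worth flagging is that ``overlapping'' must be interpreted as having a common point, i.e.\ $\bigcap_\ell F_\ell \neq \emptyset$, rather than merely pairwise overlap, since pairwise intersections do not in general imply a common intersection point. Provided this interpretation is adopted (which is the natural one for the algorithmic purpose of placing a single illuminator), the proof reduces to one invocation of \Tref{feasible} per region followed by taking a union of the illuminated arc sets.
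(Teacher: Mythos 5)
Your proof is correct and follows essentially the same route as the paper's: both apply \Tref{feasible} to each feasible region containing the common point and conclude that all the associated objective arcs are illuminated. Your version is simply a more careful write-up (including the useful clarification that ``overlapping'' should mean a nonempty common intersection), but the underlying argument is identical.
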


\begin{proof}
	From \Tref{feasible}, any point on the feasible region will always illuminate their objective arcs. Therefore, intersection of overlapping feasible regions gives smaller regions that contain all points that still illuminate to their objective arcs.
\end{proof}

In our algorithm, we are going to use the fact of visibility on overlapping feasible regions to return a smaller number of a position set of illuminating points within the convex hull of a set of circular discs, which is called a curvilinear convex hull. Since any convex body can be illuminated by three illuminating points, in future work we aim to study the positioning of these three illuminating points outside the curvilinear convex hull of a set of circular discs. Therefore, in this study, we focus on positioning the illuminating point only inside the curvilinear.
\newpage
\subsubsection{Algorithm to positioning a set of illumination points}
\textbf{Input:} A set of $n$ circular center coordinates $C\left(x, y\right)$ and radius $r$.\\
\textbf{Output:} A set of feasible regions $Q$ with their centroid coordinates.
\begin{algorithm}
\caption{Feasible region computation}
\begin{algorithmic}[1]
	\State{Construct Laguerre Delaunay triangulation $C^*$ through 3D convex hull.}
	\State{Construct a set of intersection points $I$ between each edge of triangulation and boundaries of circular discs.}
	\State{Construct a set of tangent lines $L$ of each triangulation from the intersection points to circular discs.}
	\State{Construct a set of feasible region $F$ of each triangulation by halfplane intersection.}
	\State{Construct an empty set $Q$ and $R$.}
	\State{Construct a set of neighboring triangles $P$ of a set of feasible regions $F$}
	\State{Choose an element $P_i \in P$ with largest number of member.}
	\State{If the intersection region $J$ of members of $P_i$ is nonempty, then $J\in Q$ and remove members of $P_i$ from $F$.}
	\State{If $F\neq \emptyset$, then repeat from step 8 to step 11.}
	\State{Compute centroid of each region in set $Q$ and store in set $R$.}
	\State{Output $Q$ and $R$.}
\end{algorithmic}
\end{algorithm}

The complexity analysis of the algorithm involves several steps with varying computational costs. The construction of the Laguerre Delaunay triangulation with $n$ points has a complexity of $O(n\log n)$. The construction of intersection points between triangles and circular arc boundaries takes $O(n)$. The construction of slabs tangent to all intersection points requires constant time, $O(1)$, since the operation involves a fixed number of tangent lines. Determining feasible regions as the intersection of six half-planes has a complexity of $O(n)$. Lastly, constructing the set of neighboring triangles using a search algorithm also requires $O(n)$ as it involves a search through the triangles. The most computationally expensive operation in this process is the construction of the Laguerre Delaunay triangulation $O(n\log n)$, making the overall complexity of the process $O(n\log n)$.

\section{Experiment}

\subsection{Setting of Experiment}

We applied our algorithm with practical situation by generated position data on grid and close to grid point. Also, we varies the different diameter of circles, then calculate the feasible region through our procedure. Each tree will be represented as a cylindrical object, specifically the PVC pipe (15.24 cm) and the paper tube (7.62 cm), and will be placed at points in the grid or at a random position inside a circular disc with a radius half the size of the grid around the point in the grid (Table 1). 

Our algorithm will then be used to determine camera location regions within a given forest. The centroid of feasible regions from the output will be used to position cameras to illuminate our practical situation experiments. The study will include five experiments with specific settings described in the table below.

\newpage
\begin{longtblr}[
	caption = {The setting of each experiment, including numbers of materials used for tree representation and the position of the tree center.},
	]{
		width = \linewidth,
		colspec = {Q[110]Q[306]Q[113]Q[213]Q[194]},
		cells = {c},
		cell{4}{1} = {r=2}{},
		cell{4}{4} = {r=2}{},
		cell{4}{5} = {r=2}{},
		cell{6}{1} = {r=2}{},
		cell{6}{4} = {r=2}{},
		cell{6}{5} = {r=2}{},
		cell{8}{1} = {r=2}{},
		cell{8}{4} = {r=2}{},
		cell{8}{5} = {r=2}{},
		vlines,
		hline{1-4,6,8,10} = {-}{},
		hline{5,7,9} = {2-3}{},
	}
	Experiment No. & Material of trees representation with the diameters & Number of trees & Position of trees center          & Type of scanning position      \\
	1              & PVC pipes (15.24 cm)                                & 12              & On grid                           & Single scan and multiple scans \\
	2              & PVC pipes (15.24 cm)                                & 12              & Random position around grid point & Single scan and multiple scans \\
	3              & PVC pipes (15.24 cm)                                & 6               & On grid                           & Multiple scans                 \\
	& Paper tubes (7.62 cm)                               & 6               &                                   &                                \\
	4              & PVC pipes (15.24 cm)                                & 6               & Random position around grid point & Multiple scans                 \\
	& Paper tubes (7.62 cm)                               & 6               &                                   &                                \\
	5              & PVC pipes (15.24 cm)                                & 6               & Random position around grid point & Multiple scans                 \\
	& Paper tubes (7.62 cm)                               & 6               &                                   &                                
\end{longtblr}

\begin{figure}[h!t]
	\centering
	\includegraphics[scale=0.45]{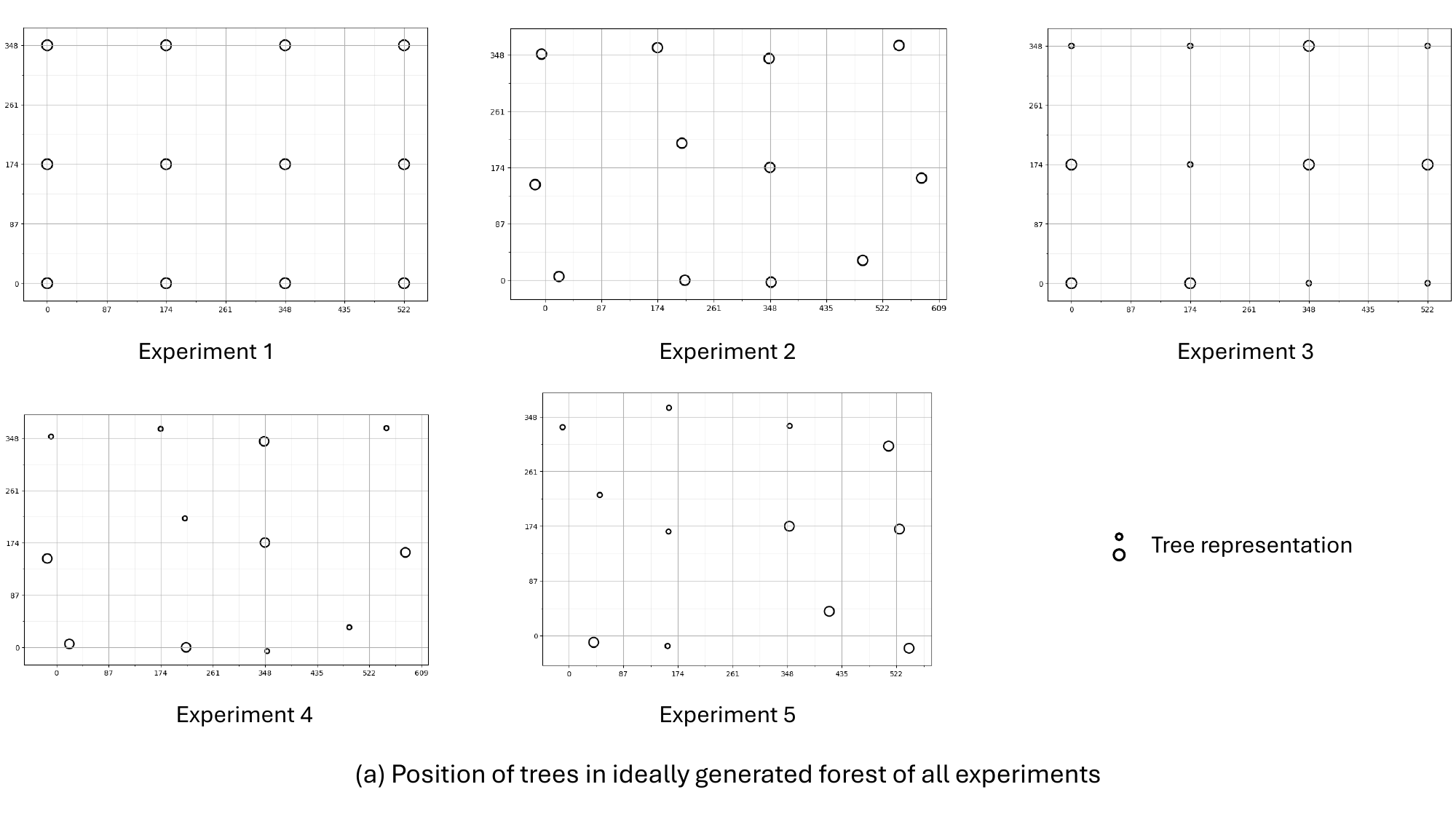}
\end{figure}

\begin{figure}[h!t]
	\centering
	\includegraphics[scale=0.45]{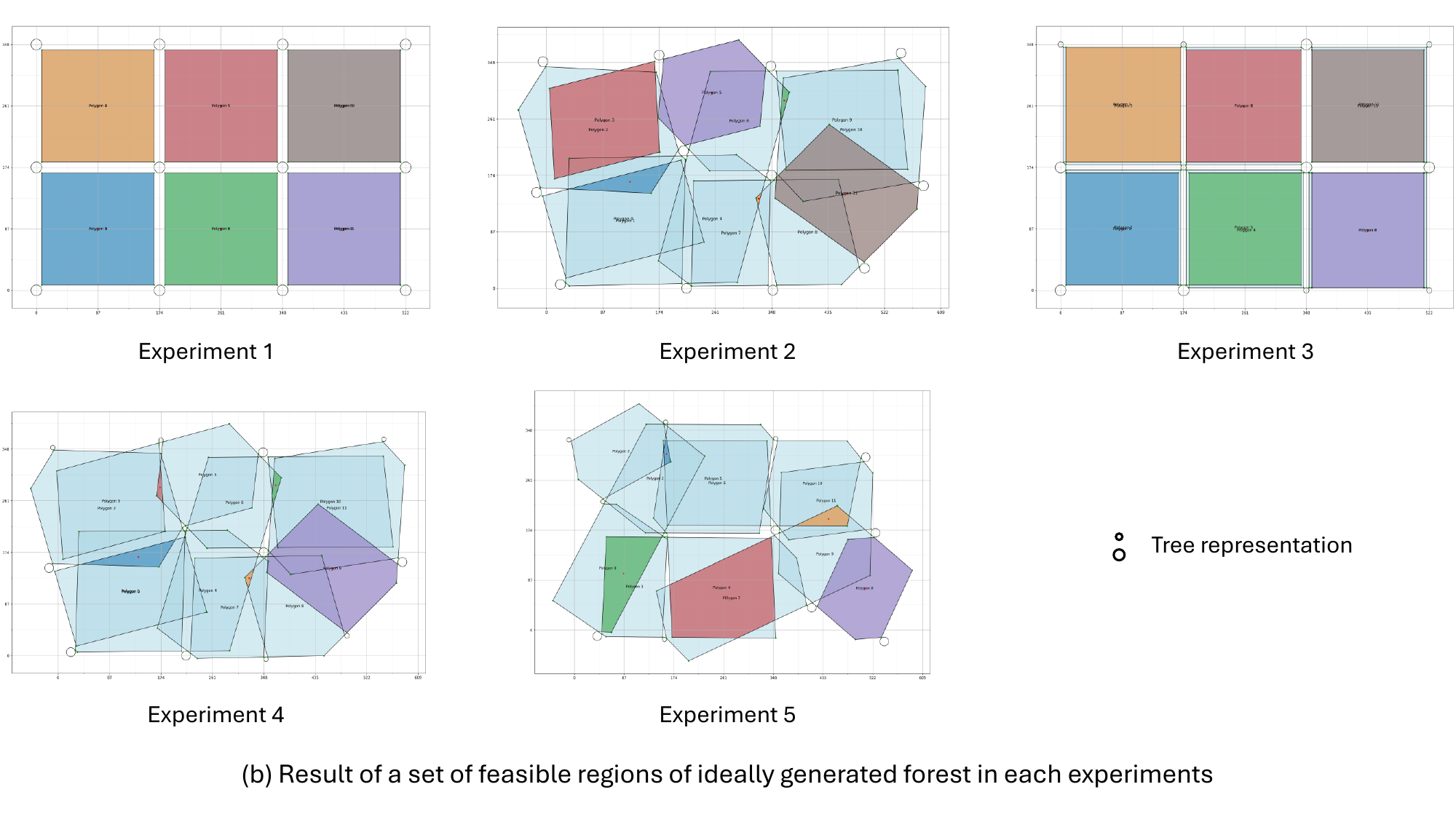}
	\caption{(a) position of trees in ideally generated forest of all experiments and (b) result of a set of feasible regions of ideally generated forest in each experiments}
	\label{dataresult}
\end{figure}

\newpage

\subsection{Experimental results and discussion}

The two experimental results include the number of detected trees from the TLS scannings showing the average of illuminated boundaries of a set of circular discs in percentage, and the theoretical average of illuminated boundaries. The latter use the 3DFIN plugin on CloudCompare to detect the trees diameter and calculate the illuminated boundary of each detected tree in the percentage. The former, theoretical results, are used in GeoGebra to construct illuminated boundaries as geometrical objects from generated data to collect the object values and then calculate in the percentage. The percentage of illuminated boundaries was determined with the standard deviration (SD) in Tables 2 and 3.

\subsubsection{Result on multiple scanning position}

\begin{longtblr}
	[
	caption = {Results from the TLS scannings and the theoretical generated the forest structure in each experiments.},
	]{
		width = \linewidth,
		colspec = {Q[117]Q[142]Q[346]Q[333]},
		cells = {c},
		hlines,
		vlines,
	}
	Experiment No. & No. of detected tree & Experimental average of illuminated boundaries (\%, ±SD) & Theoretical   average of illuminated boundaries (\%, ±SD) \\
	1              & 12                   & 77.26 ±   13.21                                     & 77.83 ± 12.16                                        \\
	2              & 12                   & 72.51 ± 14.92                                       & 73.42 ± 14.01                                        \\
	3              & 7                    & 78.37 ± 11.41                                       & 82.05 ± 13.31                                        \\
	4              & 11                   & 71.28 ± 13.82                                       & 72.66 ± 14.77                                        \\
	5              & 7                    & 77.18 ± 14.61                                       & 81.47 ± 15.30                                        
\end{longtblr}

\subsubsection{Result on single scanning position}

\begin{longtblr}[
	caption = {Result of experiment number 1 and experiment number 2 in the single scan.},
	]{
		width = \linewidth,
		colspec = {Q[108]Q[75]Q[131]Q[315]Q[306]},
		cells = {c},
		cell{2}{1} = {r=6}{},
		cell{8}{1} = {r=6}{},
		vlines,
		hline{1-2,8,14} = {-}{},
		hline{3-7,9-13} = {2-5}{},
	}
	Experiment No. & Scan No. & No. of detected tree & Experimental average of illuminated boundaries (\%, ±SD) & Theoretical average of illuminated boundaries (\%, ±SD) \\
	1              & 1        & 11                   & 47.60 ±
	2.70                                      & 48.82 ± 0.65                                       \\
	& 2        & 10                   & 49.10 ± 1.67                                        & 48.68 ± 0.57                                       \\
	& 3        & 11                   & 47.54 ± 3.13                                        & 48.82 ± 0.65                                       \\
	& 4        & 11                   & 47.98 ± 2.72                                        & 48.82 ± 0.65                                       \\
	& 5        & 10                   & 49.44 ± 1.38                                        & 48.68 ± 0.57                                       \\
	& 6        & 11                   & 47.98 ± 2.77                                        & 48.82 ± 0.65                                       \\
	2              & 1        & 11                   & 47.73 ±
	2.93                                      & 48.91 ± 0.58                                       \\
	& 2        & 10                   & 49.17 ± 2.17                                        & 48.64 ± 0.77                                       \\
	& 3        & 11                   & 47.92 ± 2.64                                        & 48.86 ± 0.64                                       \\
	& 4        & 11                   & 47.16 ± 3.05                                        & 48.81 ± 0.55                                       \\
	& 5        & 11                   & 46.72 ± 3.79                                        & 48.43 ± 1.49                                       \\
	& 6        & 12                   & 47.16 ± 2.45                                        & 48.72 ± 1.03                                       
\end{longtblr}

\subsubsection{Experimental discussion}

Compatibility between TLS scanning and the region derived from the proposed methods for tree stem detection at DBH, as a cylindrical object, can arise when the position of the trees centers are on the grid, which shows 78. 37\% and 82. 05\% of the bounces of experiment number 3 and the theory, respectively. However, the small represented trees (diameter 15.24 cm) were only 58.33\% detectable from multiple scans.

From single scan experiments number 1 and 2, we derived the lower illuminated boundaries from about 47\% to 49\% that showed the proper accuracy ranges, less than maximum 50\%, since one illuminating point cannot illuminate half of the boundary of the convex body \cite{illuprob}, for tree stem detection. In addition, scan number 6 in experiment number 2 gave the 100\% detected trees from a single scan (Table 3 and Figure \ref{allresult}).

\newpage

\begin{figure}[h!t]
	\centering
	\includegraphics[scale=0.42]{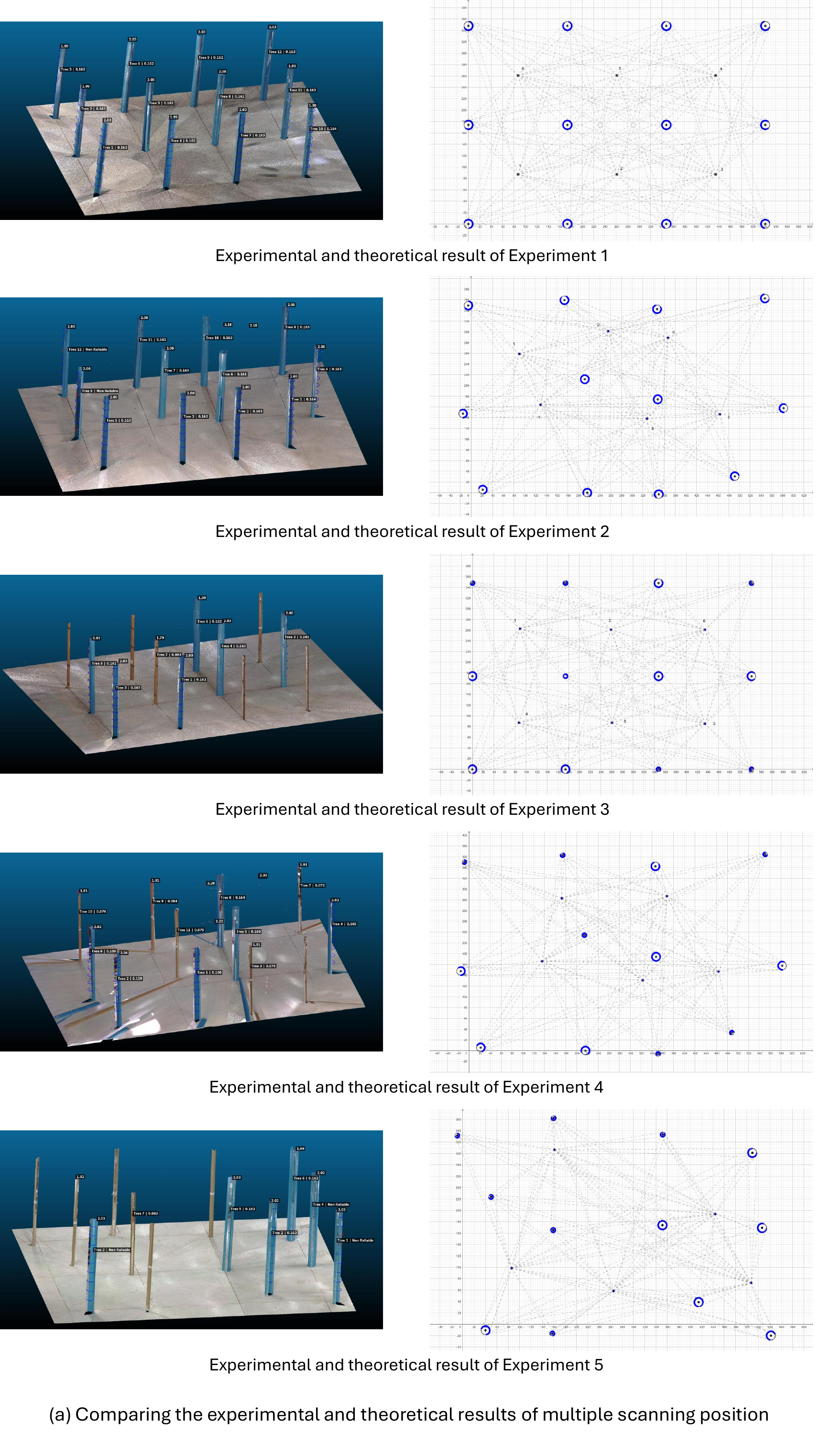}
\end{figure}

\begin{figure}[h!t]
	\centering
	\includegraphics[scale=0.43]{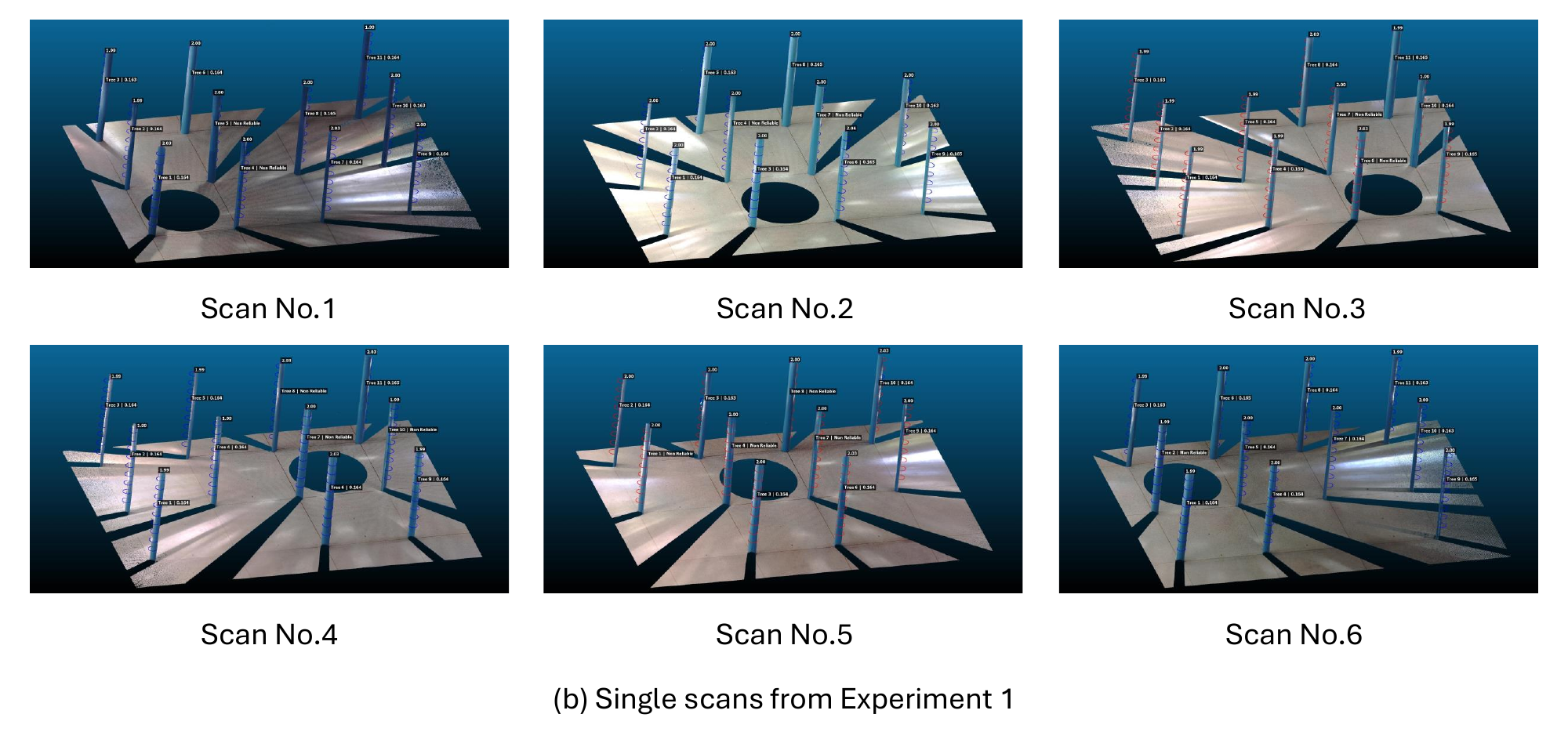}
	\includegraphics[scale=0.43]{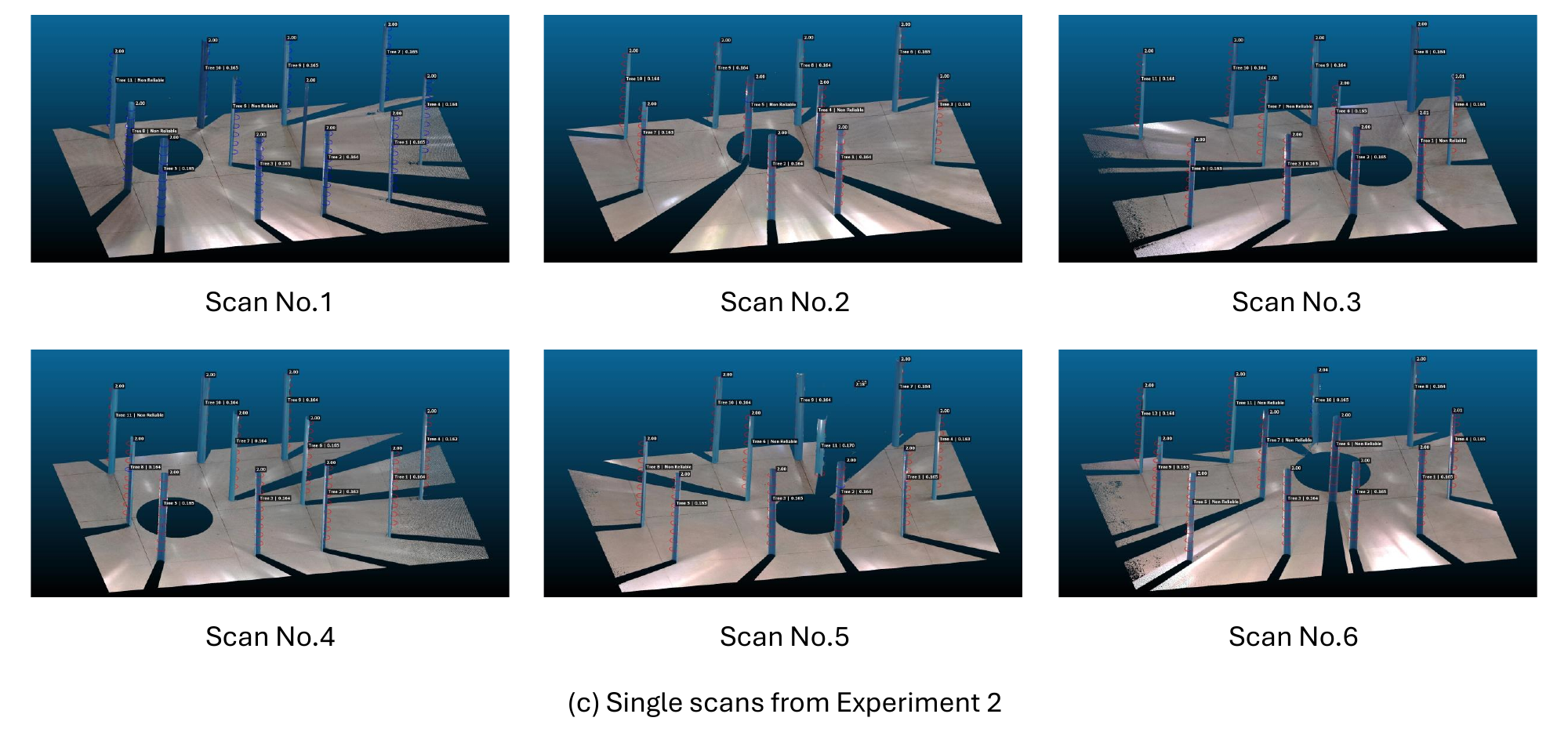}
	\caption{(a) comparing the experimental and theoretical results of all experiments, the full and partial blue cross-sectional circles represent the illuminated bounaries of trees, (b) single scans from experiment number 1 and (c) single scans from experiment number 2.}
	\label{allresult}
\end{figure}

\section{Concluding remark}

We proposed a geometrical algorithm to find a set of feasible regions to position the illuminating point given as LiDAR and camera sensors to illuminate a set of circular discs given as a set of tree stems from the cross section. The algorithm uses geometric methods such as the Laguerre Delaunay triangulation and the slab to specify feasible regions for positioning the illuminating point and studying their shape. Also, we define an objective arc, which is constructed through the triangulation edges and circular disc boundaries, to be the least arc of a circular disc boundary illuminated by an illuminating point. From these geometrical objects, we come up with some theory to support the algorithm.

This study applied the algorithm to five experiments with practical situations by generating position data on the grid and close to the grid point and varying the different diameters of circles as tree stems in Table 1. Moreover, we provide two different results of the scanning position, which are multiple scan and single scan. Both different scanning positions give compatible results between TLS scanning and the region derived from the proposed methods for tree stem detection at DBH in Table 2 and Table 3. The result on average of illuminated boundaries in the percentage of a single scanning position gives proper accuracy ranges for tree stem detection.

For future work, we plan to prove mathematical methods to minimize the number of scans in tree stem illumination problems both inside and outside the curvilinear of a set of circular discs. Also, we tend to find possible shapes of the feasible region to give patterns of triangulation alignment for positioning the illuminating point in practical situations. Furthermore, we aim to improve the geometrical algorithm to perform the experiment with a real forest.

\acknowledgements{The first author would thank the  master's degree friends from the Department of Mathematics, Faculty of Science, Chiang Mai University, especially Asama Jampeepan for great discussion while working on this paper. The discussions were done through Geometry Boot Camp 2023-2024, especially Nitipol Nitipon Moonwichit and Nattawut Phetmak. The first author is supported by the Teaching Assistant and Research Assistant Scholarship for the Academic Year 2023-2024 from Chiang Mai University. This research is partially supported by Chiang Mai University, and the experiment in this research is supported by FORRU's Small-Grant Support for Student Projects.}


\end{document}